\newcommand{\comment}[1]{}
\theoremstyle{plain}
\newtheorem{theorem}{Theorem}[section]
\newtheorem{lemma}[theorem]{Lemma}
\theoremstyle{definition}
\newtheorem{definition}[theorem]{Definition}
\theoremstyle{remark}
\newtheorem{remark}{Remark}
\def\eps{{\varepsilon}}
\def\l{{\lambda}}
\def\Var{{Var}}
\def\ML{{Matching Level}}
\def\ml{{matching level}}
\def\MLs{{Matching Levels}}
\def\mls{{matching levels}}
\newcommand{\maxint}{\textsc{MaxInt}}
\newcommand{\mirank}{\textsc{MiRank}}
\newcommand{\calG}{\mathcal{G}}
\newcommand{\calP}{\mathcal{P}}
\newcommand{\calF}{\mathcal{F}}
\newcommand{\calD}{\mathcal{D}}
\newcommand{\calT}{\mathcal{T}}
\newcommand{\calO}{\mathcal{O}}
\newcommand{\new}{\mathit{new}}
\newcommand{\poly}{\mathit{poly}}
\newcommand{\PN}[1]{\textbf{#1}\par}
\newcommand{\DB}{\item[Database]}
\newcommand{\QY}{\item[Query]}
\newcommand{\CONSTR}{\item[Constraints]}
\newenvironment{BoxItList}%
  {\begin{list}{}%
         {%
           \setlength{\itemsep}{2pt}%
         }%
  }%
  {\end{list}}
\newsavebox{\savefboxit}
\newlength{\savefboxitsep}
\newlength{\fboxitlength}
\newenvironment{BoxIt}[1][Consider:]{%
  \setlength{\savefboxitsep}{\the\fboxsep}%
  \setlength{\fboxsep}{3mm}
  \setlength{\fboxitlength}{11.5cm} 
  \begin{lrbox}{\savefboxit}
  \begin{minipage}{\fboxitlength}
  \PN{#1}
  \begin{BoxItList}
  }
  {
   \end{BoxItList}
   \end{minipage}
   \end{lrbox}\noindent\begin{center}\fbox{\usebox{\savefboxit}}%
   \end{center}
  }
\newenvironment{BoxItAlgo}{%
  \setlength{\savefboxitsep}{\the\fboxsep}%
  \setlength{\fboxsep}{3mm}
  \setlength{\fboxitlength}{11.5cm} 
  \begin{lrbox}{\savefboxit}
  \begin{minipage}{\fboxitlength}
  \begin{BoxItList}
  }
  {
   \end{BoxItList}
   \end{minipage}
   \end{lrbox}\noindent\begin{center}\fbox{\usebox{\savefboxit}}%
   \end{center}
  }
\begin{document}

\title{Maximal Intersection Queries in Randomized Input Models}
\author{
  Benjamin Hoffmann\thanks{Email: hoffmann@fmi.uni-stuttgart.de}\\{\small
  Universit\"at Stuttgart, Germany}
  \and
  Mikhail Lifshits\thanks{Email: lifts@mail.rcom.ru}\\
  {\small St.~Petersburg State University, Russia}
  \and
  Yury Lifshits\thanks{Email: yury@caltech.edu}\\
  {\small California Institute of Technology, USA}
  \and
  Dirk Nowotka\thanks{Email: nowotka@fmi.uni-stuttgart.de}\\{\small
  Universit\"at Stuttgart, Germany}
}
\date{}

\maketitle

\begin{abstract}
  Consider a family of sets and a single set, called the
  query set. How can one quickly find a member
  of the family which has a maximal intersection with
  the query set? Time constraints on the query and on a possible
  preprocessing of the set family make this problem challenging.
  Such maximal intersection queries arise in a wide range of applications,
  including web search, recommendation systems, and distributing
  on-line advertisements. In general, maximal intersection
  queries are computationally expensive. We investigate two 
  well-motivated distributions over all families
  of sets and propose an algorithm for each of them. We show that with
  very high probability an almost optimal solution is found in
  time which is logarithmic in the size of the family. Moreover, we point
  out a~threshold phenomenon on the probabilities of intersecting
  sets in each of our two input models which leads
  to the efficient algorithms mentioned above.
\end{abstract}

\section{Introduction}\label{sec:intro}
The \emph{nearest neighbor} problem is the task to determine in a
general metric space a point that is closest to a given query
point. This kind of queries appear in a~huge number of applied
problems: text classification, handwriting recognition,
recommendation systems, distributing on-line advertisements,
near-duplicate detection, and code plagiarism detection.

In this paper we consider the nearest neighbor problem in a ``binary''
form. Namely, every object is described as a set of its features
and similarity is defined as the number of common features. 
\comment{
In this paper we consider the nearest neighbor problem in a ``binary''
form. Namely, every object is described as a set of its features
and similarity is defined as the number of common features. For
some cases, like recommending a person who has a maximal number of
joint friends with you but is not your direct friend, this
formalization is quite natural. Moreover, weighted models
could be simply reduced to the binary form. Let us illustrate this
reduction with an example. Assume that we are working with documents
and their terms, and one particular term is ``Ekaterinburg''. Then
we can introduce eight new artificial terms Ekaterinburg1, \dots
Ekaterinburg8. For an object having the largest weight for
Ekaterinburg in weighted representation we simply put all eight
new terms, while for objects with small weights we put only
Ekaterinburg1. Thus, an intersection similarity for the new
representation is somehow reflecting the classical scalar product
similarity for the vector model.
} In order to construct an efficient solution some assumptions
should be added to the problem. Here we assume that the
input behaves according to some predefined distribution. Then we
construct an algorithm and show that the time complexity and/or
the accuracy are reasonably good \emph{with high probability}.
Here we use the probability over the input distribution, not over
random choices of the algorithm. This probabilistic approach was
inspired by the recent survey of Newman~\cite{N03}. He gives a
comprehensive survey about random models of graphs that agree well
with many real life networks, including Web graphs, friendship
graphs, co-authorship graphs, and many others. Hence, we can
attack the nearest neighbor problem in already ``verified'' random models.

\paragraph{The Maximal Intersection Problem.} Consider a family
of sets and a single set. We ask for a member of the set family
which has a maximal intersection with the query set.

\begin{BoxIt}[The Maximal Intersection Problem (\maxint)]
  \DB A family $\calF$ of $n$ sets such that $|f|\leq k$
    for all $f\in \calF$.
  \QY Given a set $f_\new$ with $|f_\new|\leq k$,
    return $f_i\in\calF$ with maximal $|f_\new\cap f_i|$.
  \CONSTR Preprocessing time $n\cdot (\log n)^{\calO(1)}\cdot k^{\calO(1)}$.\\
    Query time $(\log n)^{\calO(1)}\cdot k^{\calO(1)}$.
\end{BoxIt}

Let us restate the problem in a graph theoretical
notation which will allow a more convenient description 
of some applications of \maxint\ later. A~{data\-base} is a
bipartite graph with vertex set partition $(V, V')$ such that
$|V|=n$ and the degree of every $v\in V$ is at most $k$. A query
is a (new) vertex $v$ (together with edges connecting $v$ with
$V'$) of degree at most $k$. The query task is to return a vertex
$u\in V$ with a maximal number of paths of length $2$ from $v$ to
$u$.

Our main motivation for studying \maxint\ was problems like 
text clustering, near-duplicate detection or distribution of
on-line advertisements. In these problems, the database mainly consists of 
natural language text documents. Therefore, we will deal in the rest of the paper with
documents and terms instead of sets and elements. On the other hand,
we want to stress that our ideas and algorithms can be applied to
every input following our models. Note that in this work documents are not 
considered to be multisets of terms. But, as we will see in Section \ref{sec:ZipfModel}, 
we use the fact that every term in a document occurs with a certain multiplicity.

\paragraph{Results.}
In Section 2 and Section 3 we propose two new randomized input models for \maxint, 
called the \emph{Zipf model} and the \emph{hierarchical scheme}. Assume that the terms of a query document are
ordered by their frequency in the document collection. Now
consider the probability curves for the two following events
with parameter $q$ (Figure 1). \emph{Any $q$-match}: there is a document in
the random (according to our models) collection that has at least
$q$ common terms with the query document (the solid curve).
\emph{Prefix $q$-match}: there is a document in the random collection that
has at least the first $q$ terms (according to the order given by the term frequencies) 
of the query document (the dashed curve).
Both curves have the similar structure: the probability is close to 1 for small $q$,
but suddenly, at some ``\ml'', it falls to nearly zero.
\begin{figure}[h]
\includegraphics[scale=1]{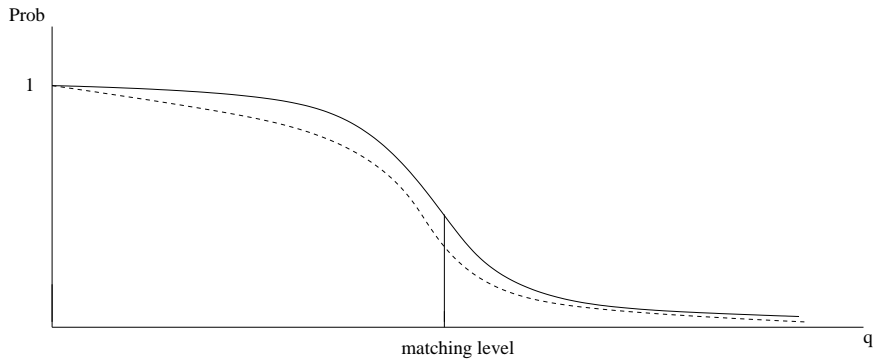}
\caption{Exemplary probability curves for any $q$-match and prefix $q$-match}
\end{figure}
Our main observation is that these \mls\ for prefix $q$-match and any $q$-match are very close to each
other. And this is extremely important for solving \maxint.
Indeed, finding the best prefix $q$-match is computationally feasible.
We show that closeness of \mls\ for any $q$-match and prefix $q$-match with high
probability allows to find an approximate solution for \maxint.

With respect to the conference version of this paper we generalize the \ml\ theorem 
from regular queries to any query taken from the Zipf model \cite{HLN07}.

\paragraph{Applications.} The \maxint\ problem is a natural
formalization for many practical problems:
\begin{description}
  \item[Long search queries.]
    Consider a bipartite graph representing websites and their
    words, that is, every website is represented as a set of words.
    Let a query be a moderately large set of words (say, 100 words).
    For example, one might get a large query by expanding a five word query
    by adding all
    synonyms. The search for a website containing ALL query terms
    might produce no result. Hence, a search for a website that has
    maximal intersection with the query set is a natural alternative.
    Therefore, efficient algorithms for \maxint\ can help web search
    engines like \url{google.com}\footnote{The reference time for all links
    mentioned in this paper is April 2007.}, \url{yahoo.com},
    or \url{msn.com} to relax their restrictions on the query length.
  \item[Content-based similarity.]
    Consider a bipartite graph representing documents and their terms.
    Finding a document in a database that has a maximal number of
    common terms with a newcomer document
    might be a basic routine for text clustering/classification and
    duplicates detection. Particular examples are news classification
    systems (\url{reuters.com}), news clustering (\url{news.google.com}),
    and spam detection.
  \item[New connection suggestions.]
    Consider an undirected graph between people representing for example
    friendship or co-authorship. Here, every person is described by his name
    and a list of all his friends. Then, applying a \maxint\ query to
    a (new) person we get a natural suggestion for establishing
    a new connection for her. Indeed, we get a person that has
    a maximal number of joint friends with the query person.
    Related systems can be found at \url{linkedin.com} (for acquaintances)
    and \url{dblp.uni-trier.de} (for co-authorship).
  \item[Co-occurrence similarity.]
    Consider an audience graph between people and some items.
    Every item is represented by a set of people who are interested in
    it. Take a (new) item together with its audience. Then, the
    \maxint\ query returns an item that has the maximal co-occurrence
    with the query item in people's preference lists. Particular examples
    are the music band similarity by their listeners
    (\url{last.fm}) and RSS-feeds similarity by their subscribers
    (\url{bloglines.com} and \url{feedburner.com}).
  \item[Advertisement Matching.]
    Delivering advertisement  relevant to users interests is one of the most
    important problems in web technologies \cite{LN07}. \maxint\ can
    reflect this challenge in a natural way: Consider a graph representing
    websites participating in some ad distribution system and their terms. A query is a set of
    terms that describe some advertisement and its target audience.
    Here, a solution of \maxint\ suggests a website that is among the best candidates
    to display the given ad. See \url{google.com/adsense} as an example
    system for ad distribution.
  \item[Social recommendations.]
    Consider a bipartite graph between people and their recommendations
    (e.g. for books, bars, cars). A query is a set of friends of some
    newcomer person. Finding an item that is already chosen
    by many of the newcomer's friends is a natural form of recommendation.
    The friendship graph together with recommended items is for example
    accumulated on \url{facebook.com}.
\end{description}

Note that for some of these applications the Jaccard similarity coefficient may also be 
an appropriate similarity measure.

\paragraph{Related Work.} \maxint\ is a special case of the nearest
neighbor problem. Indeed, one just needs to define the similarity
between two documents as the number of common words. There is also
a way to define a {\it metric} (i.e. distance function satisfying
the triangle inequality) providing the reverse similarity order. To do
this, we need to add some unique ``imaginary'' words to every
document making their size equal and then use the Jaccard metric
\cite{B97}. Denoting the maximal cardinality of a document in the
collection by $M$ the resulting formula is $d(A,B)=\frac{2M-2|A\cap B|}{2M-|A\cap B|}$. 
Instead of the Jaccard metric one could use the size of the symmetric difference as well (again, one has to 
add unique words to every document making their size equal). This defines again a metric which provides the 
reverse similarity order.

Many efficient algorithms have been developed for nearest
neighbor search in special cases or under various assumptions; see
recent survey papers \cite{BBK01,CNB+01,C06,HS03,I04} and the book
\cite{ZAD+06} for comprehensive reviews. Nearest neighbors are
particularly well studied in vector models with the Euclidean distance
function \cite{K97,FKS03}. Actually, we can interpret a~document
as a vector of $0$s and $1$s ($1$ means a term is contained in a
document). Then, the scalar product is equal to the size of the
intersection. Unfortunately, random projection methods studied
are not directly applicable to \maxint. Namely, (1) we do not
allow that the complexity is linear in the vector length, and (2)
a $c$-approximate solution for the Euclidean distance is not necessarily
a $c$-approximate solution for the size-of-intersection similarity. 
Note that the length of the vectors (resulting from the 
overall number of different terms in the document collection) can be much 
larger than the size of the document collection. 
\comment{
For finding near-duplicates in the Web minhashing \cite{BCF+98}
and locality-sensitive hashing \cite{AI06} techniques were
suggested. In a nutshell, in the preprocessing stage every document is
hashed with several randomly chosen hash functions. Given a query
document one compute its hash functions and check explicitly all
documents having the same values. Actually, a variant of our algorithm can be
used as a hashing method where all combinations of regular documents are considered to 
be the hash value of a document. It is straightforward to see from 
Theorem 2.3 that a prefix match of a regular document occurs with 
high probability.
}

Closely related to \maxint\  is \emph{text search}. Finding
documents that fit best to some given search terms can also be
considered as a problem on a bipartite graph. The documents and
terms are the nodes and edges are drawn when a term occurs in a
document. Basically the task is to find all documents containing
\emph{every} query term and rank these documents by relevance. The
key technique in this area is inverted files (inverted indexing).
A comprehensive survey of the topic can be found in \cite{ZM06}.

\section{\maxint\ in the Zipf Model}\label{sec:ZipfModel}

\comment{
Throughout the following sections we use a documents-terms notation,
that is, elements from $\calD = \{d_1,\ldots,d_n\}$ represent
documents and elements from $\calT = \{t_1,\ldots,t_m\}$ represent
terms. Let $m \leq \poly(n)$. By $\log$ we always
mean $\log_2$, while $\ln$ denotes $\log_e$.
}

Let $\calT=\{t_1,\ldots,t_m\}$ be a set of \emph{terms} and $\calD = \wp(\calT)$ be the power
set of $\calT$, called \emph{documents}.
A document collection $\calD_n$ is a subset $\{d_1,\ldots,d_n\} \subseteq\calD$.
We demand $m \in n^{\calO(1)}$. In the following we will use the terms \emph{prefix match} and 
\emph{any match} instead of prefix $q$-match and any $q$-match since the size of a matching 
is always stated explicitly. By $\log$ we always mean $\log_2$, while $\ln$ denotes $\log_e$.

We now describe a probabilistic mechanism for generating a
document collection called the \emph{Zipf model}. Every document
is generated independently. Term occurrences are also
independent. A document contains term $t_i$ with probability
$1/i$. Hence, the expected number of terms in a document
is approximately equal to $\ln m$ in our model. This model is
similar to the \emph{configuration model} (\cite{N03}) with
Zipf's law for distribution of term degrees and constant document
degrees. Zipf's law states that in natural language texts the
frequency $f$ of a word is approximately inversely proportional to its
rank $r$ in the frequency table, i.e. there exists a constant $c$ such that
$f \cdot r \approx c$ (Table 1)\comment{\footnote{ In the \emph{frequency
table}, the most frequent term is at rank 1, the second most
frequent term at rank 2 and so on.}}. For more details about Zipf's
law see \cite{MS02}.
\begin{table}[h]
\begin{center}
  \begin{tabular}{l@{\quad}r@{\quad}r@{\quad}r@{\quad}|@{\quad}l@{\quad}r@{\quad}r@{\quad}r}
  Word & Frequency & Rank & $f\cdot r$ & Word & Frequency & Rank & $f \cdot r$\\
  \hline
  the & 3332 & 1 & 3332 & turned & 51 & 200 & 10200\\
  and & 2972 & 2 & 5944 & you'll & 30 & 300 & 9000\\
  a & 1775 & 3 & 5235 & name & 21 & 400 & 8400\\
  he & 877 & 10 & 8770 & comes & 16 & 500 & 8000\\
  but & 410 & 20 & 8400 & group & 13 & 600 & 7800\\
  be & 294 & 30 & 8820 & lead & 11 & 700 & 7700\\
  there & 222 & 40 & 8880 & friends & 10 & 800 & 8000\\
  one & 172 & 50 & 8600 & begin & 9 & 900 & 8100\\
  about & 158 & 60 & 9480 & family & 8 & 1000 & 8000\\
  more & 138 & 70 & 9660 & brushed & 4 & 2000 & 8000\\
  never & 124 & 80 & 9920 & sins & 2 & 3000 & 6000\\
  Oh & 116 & 90 & 10440 & Could & 2 & 4000 & 8000\\
  two & 104 & 100 & 10400 & Applausive & 1 & 8000 & 8000
  \end{tabular}\\
\vspace*{0.2cm}
\caption{Empirical evaluation of Zipf's law on Tom Sawyer}
\end{center}
\end{table}
\begin{remark}The frequency of a term $t$ in a collection $\calD_n$ of
documents is defined as $$\frac{|\{d \in \calD_n\>|\>t \in
d\}|}{n}.$$ The expected frequency of the term $t_i$ is
equal to $1/i$. At the same time, the expected frequency
rank for $t_i$ is exactly the $i$-th value among those of all
terms. So the Zipf model reflects in a natural way Zipf's law.
Since some of our motivating applications also deal with natural
language texts, we can state that the Zipf model agrees with real
life at least by degree distribution.
\end{remark}
\begin{remark}
By defining the probability of the term $t_i$ to be contained in a document as $1/i$,
the set $\calD$ yields a probability space where a document $d$ is an event that occurs with 
probability $P(d) = \left(\prod_{t_i\in d}\frac{1}{i}\right) \left(\prod_{t_i\not\in d} 1-\frac{1}{i}\right)$. 

\end{remark}
In the following proofs we will use two inequalities ($a,b >0$):
\begin{equation}\label{eq:1}
\left(1 - \frac{a}{b}\right)^b < e^{-a},\qquad a \leq b
\end{equation}
\begin{equation}\label{eq:2}
\left(1 - \frac{1}{ab}\right)^{a} \geq 1-\frac{1}{b},\qquad a,b \geq 1
\end{equation}
Indeed, let $g(x)=\ln(1-x)/x,\> 0< x< 1$. Notice that $g$ is
a decreasing function. 
The first inequality follows from $g(a/b)\le \lim_{x\to 0} g(x)=-1$, while the
second one is equivalent to $g(1/b)\le g(1/ab)$.\\

For further considerations we introduce the following terms and definitions: 
\comment{
\begin{description}
\item[Regular document:]
Let $$\underbrace{t_1\;t_2}_{P_1}\quad\underbrace{t_3\;t_4\;t_5\;t_6\;t_7}_{P_2}\quad\ldots$$
be a partition of the set of terms. The group $P_i$ includes terms from $t_{\lceil e^{i-1} \rceil}$
to $t_{\lfloor e^i\rfloor}$. We say that a document $d\in \calD$ is \emph{regular} if it contains exactly $\ln m$ terms
$p_1\ldots p_{\ln m}$ such that $p_i \in P_i$.
\item[$\delta$-$n$-generic document:] 
Let $0 < \delta \leq 1$. We say that a document $d \in \calD$ is $\delta$-$n$-\emph{generic} if the following 
holds:
$$\forall i \geq \delta \sqrt{2 \ln n}:\qquad |\{t_j \in d \mid j \leq e^i\}| \geq (1-\delta) i.$$
\end{description}
}
\begin{definition}\label{def:reg}
Let $$\underbrace{t_1\;t_2}_{P_1}\quad\underbrace{t_3\;t_4\;t_5\;t_6\;t_7}_{P_2}\quad\ldots$$
be a partition of the set of terms. The group $P_i$ includes terms from $t_{\lceil e^{i-1} \rceil}$
to $t_{\lfloor e^i\rfloor}$. We say that a document $d\in \calD$ is \emph{regular} if it contains exactly $\ln m$ terms
$p_1\ldots p_{\ln m}$ such that $p_i \in P_i$.
\end{definition}
\begin{remark}
Note that the expected number of terms in each group $P_i$ is approximately one. If $\ln m$ is 
not an integer then the index of the last group is $\lceil \ln m \rceil$. 
In this case the expected number of terms in this group is smaller than one. To make the following proofs 
more legible we do not demand that the number of terms of a document or the matching size is an integer. 
In real settings these values have to be rounded appropriately. 
\end{remark}
\begin{definition}
Let $0 < \delta < 1$. We say that a document $d \in \calD$ is $\delta$-$n$-\emph{generic} if the following 
holds:
$$\forall i \geq \delta \sqrt{2 \ln n}:\qquad |\{t_j \in d \mid j \leq e^i\}| \geq (1-\delta) i.$$
\end{definition}

\begin{lemma}\label{lem1}
Let $0 < \delta < 1$ and $c=e^{-\delta^2/2}$. Let $d \in \calD$ be a random document following the Zipf model. 
It holds that for a sufficiently large $n$ the probability that $d$ is $\delta$-$n$-generic is 
greater than $1-c^{1.4\,\delta\sqrt{\ln n}}/(1-c)$.
\end{lemma}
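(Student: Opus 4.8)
The plan is to control, for each relevant scale $i$, the count $X_i := |\{t_j \in d \mid j \le e^i\}|$ of small-index terms present in $d$, and then take a union bound over all scales $i$ in the range $i \ge \delta\sqrt{2\ln n}$. The first step is to observe that, under the Zipf model, $X_i$ is a sum of independent Bernoulli variables, where $t_j$ is present with probability $1/j$ for $j = 1,\ldots,\lfloor e^i\rfloor$. Its expectation is the harmonic sum $\mu_i = \sum_{j=1}^{\lfloor e^i\rfloor} 1/j$, and since $\lfloor e^i\rfloor + 1 > e^i$ the estimate $H_N \ge \ln(N+1)$ gives $\mu_i > i$. Hence the genericity threshold $(1-\delta)i$ lies strictly below the mean, so failing it at scale $i$ is a genuine lower-tail event.

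The core step is a multiplicative Chernoff bound on this lower tail. Writing the threshold in the form $(1-\delta)i = (1-\eta_i)\mu_i$ forces $\eta_i = 1 - (1-\delta)i/\mu_i$, and $\mu_i \ge i$ yields $\eta_i \ge \delta$ (while $\eta_i < 1$). The standard lower-tail estimate $P(X_i \le (1-\eta)\mu_i) \le \exp(-\eta^2\mu_i/2)$, whose key moment-generating-function step $1 - x < e^{-x}$ is precisely the content of inequality \eqref{eq:1}, then gives
$$P(X_i < (1-\delta)i) \le \exp(-\eta_i^2\mu_i/2) \le \exp(-\delta^2 i/2) = c^i,$$
using $\eta_i \ge \delta$ together with $\mu_i \ge i$ and $c = e^{-\delta^2/2}$. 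This is the crux: the base $c$ of the geometric decay is pinned down exactly by the mean exceeding $i$.

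With the per-scale bound in hand, I would finish by a union bound over the integer scales $i$, the smallest of which is $\lceil \delta\sqrt{2\ln n}\rceil \ge \delta\sqrt{2\ln n}$. Summing the geometric tail gives
$$P(d \text{ is not } \delta\text{-}n\text{-generic}) \le \sum_{i \ge \delta\sqrt{2\ln n}} c^i = \frac{c^{\lceil\delta\sqrt{2\ln n}\rceil}}{1-c} \le \frac{c^{\delta\sqrt{2\ln n}}}{1-c}.$$
Since $\sqrt{2} > 1.4$ and $0 < c < 1$, we have $c^{\delta\sqrt{2\ln n}} = c^{\sqrt{2}\,\delta\sqrt{\ln n}} < c^{1.4\,\delta\sqrt{\ln n}}$, which rewrites the exponent into the stated constant and, after complementation, delivers the claimed lower bound on the probability of being generic.

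I expect the main obstacle to be the bookkeeping at the two ends of the range rather than the probabilistic heart of the argument. At the lower end one needs the first integer scale to dominate $\delta\sqrt{2\ln n}$, which is clean. At the upper end the scales run up to $i \approx \ln m$, where the set $\{t_j : j \le e^i\}$ exhausts $\calT$ and $X_i$ saturates; there the harmonic lower bound $\mu_i \ge i$ can become tight, so the Chernoff base is only barely pinned to $c$. This top-boundary case, together with the integrality conventions waived in the preceding remark, is exactly where the hypothesis of a \emph{sufficiently large} $n$ is spent: it forces the threshold $\delta\sqrt{2\ln n}$ to be large enough that the leading geometric term dominates and the lower-order corrections are absorbed by the slack between $\sqrt{2}$ and $1.4$.
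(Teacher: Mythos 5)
Your proposal is correct and takes essentially the same route as the paper's proof: a per-scale multiplicative Chernoff bound $P\bigl(X_i < (1-\delta)i\bigr) \le c^i$, justified (as in the paper) by the fact that the mean $EX_i$ exceeds $i$, followed by a geometric-series union bound over the integer scales $i \ge \delta\sqrt{2\ln n}$ and the exponent rewriting via $1.4 < \sqrt{2}$. Your bookkeeping is in fact marginally tighter --- you start the geometric tail at $\lceil \delta\sqrt{2\ln n}\rceil$ where the paper starts at $\lfloor \delta\sqrt{2\ln n}\rfloor$ and therefore must absorb an extra factor $c^{-1}$ using the largeness of $n$ --- but this is a cosmetic difference, not a different argument.
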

\begin{proof}
Let $X$ be a random variable denoting the expected number of terms in $d$ up to the term $t_{e^i}$. 
For a fixed $i$ the Chernoff bound $P(X \leq (1-\delta)EX) \leq e^{-EX \delta^2/2}$ yield that the probability that 
$d$ contains less than $(1-\delta)i$ terms up to the term $t_{e^i}$ is smaller than $e^{-i \delta^2/2}$. This holds 
since $i < EX$ and therefore
$$P(X \leq (1-\delta)i) \leq P(X \leq (1-\delta)EX) \leq e^{-EX \delta^2/2} < e^{-i \delta^2/2}.$$
So the probability that $d$ is not $\delta$-$n$-generic is for large $n$ bounded by 
\begin{eqnarray*}
\sum_{i\geq0} c^i - \sum_{i=0}^{\lfloor\delta\sqrt{2 \ln n}\rfloor-1} c^i &=& \frac{1}{1-c} - 
\frac{1-c^{\lfloor\delta\sqrt{2\ln n}\rfloor}}{1-c}\\
&=& \frac{c^{\lfloor\delta\sqrt{2\ln n}\rfloor}}{1-c}\\
&\leq& \frac{c^{\sqrt{2}\,\delta\sqrt{\ln n}-1}}{1-c}\\
&<& \frac{c^{1.4\,\delta\sqrt{\ln n}}}{1-c}
\end{eqnarray*}
This holds because $1.4 < \sqrt{2}$ and $n$ is large. Overall, the probability that $d$ contains 
$(1-\delta)i$ or more terms is greater than $1-c^{1.4\,\delta\sqrt{\ln n}}/(1-c)$. 
\end{proof}

\begin{lemma}\label{lem2}
Let $d\in\calD$ be a random document following the Zipf model and let $0 < \delta < 1$ and $c = e^{-\delta^2/2}$. 
If we insert the first $\delta \sqrt{2 \ln n}$ missing terms to $d$ (assuming that there are missing terms), then for a 
sufficiently large $n$ the following holds:
\[P\left(\forall i \leq \sqrt{2 \ln n}:\> |\{t_j \in d \mid j \leq e^i\}| \geq i\right) > 1-c^{1.4\,\delta\sqrt{\ln n}}/(1-c).\]
\end{lemma}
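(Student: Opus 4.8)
The plan is to separate the probabilistic part from a purely combinatorial one. Write $N_i(d)=|\{t_j\in d\mid j\le e^i\}|$ for the number of terms of $d$ among the first $\lfloor e^i\rfloor$ terms, and set $k=\delta\sqrt{2\ln n}$, the number of terms we insert. First I would invoke Lemma~\ref{lem1}: with probability greater than $1-c^{1.4\,\delta\sqrt{\ln n}}/(1-c)$ the random document $d$ is $\delta$-$n$-generic, i.e.\ $N_i(d)\ge(1-\delta)i$ for all $i\ge\delta\sqrt{2\ln n}$. Since the modified document $d'$ is a deterministic function of $d$, it then suffices to prove the deterministic implication: \emph{if $d$ is $\delta$-$n$-generic, then after inserting the first $k$ missing terms we have $N_i(d')\ge i$ for every $i\le\sqrt{2\ln n}$.} The claimed probability bound is then inherited verbatim from Lemma~\ref{lem1}.

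The first combinatorial step is an exact bookkeeping identity for the insertion. Let $M_i$ denote the number of terms with index $\le\lfloor e^i\rfloor$ that are missing from $d$, so that $N_i(d)+M_i=\lfloor e^i\rfloor$. Because we insert exactly the $k$ missing terms of smallest index, the number of inserted terms with index $\le\lfloor e^i\rfloor$ equals $\min(k,M_i)$. Hence
\[
  N_i(d')=N_i(d)+\min(k,M_i)=\min\bigl(N_i(d)+k,\ \lfloor e^i\rfloor\bigr).
\]
Thus it is enough to check that both arguments of this minimum are at least $i$. The second one is immediate, since $e^i\ge 1+i>i$ gives $\lfloor e^i\rfloor\ge i$.

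For the first argument, $N_i(d)+k\ge i$, I would split on the value of $i$. If $i<\delta\sqrt{2\ln n}=k$, then already $N_i(d)+k\ge k>i$, so the inserted terms alone suffice. If $\delta\sqrt{2\ln n}\le i\le\sqrt{2\ln n}$, the $\delta$-$n$-generic property applies and gives $N_i(d)\ge(1-\delta)i$, whence $N_i(d)+k\ge(1-\delta)i+\delta\sqrt{2\ln n}\ge(1-\delta)i+\delta i=i$, where the last step uses $i\le\sqrt{2\ln n}$. This is the crux of the argument and the place where the two design choices balance against each other: the insertion budget $k=\delta\sqrt{2\ln n}$ is tuned precisely so that the deficit $\delta i$ tolerated by genericity is compensated exactly on the range $i\le\sqrt{2\ln n}$. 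The main thing to get right is therefore not any single estimate but this matching of the insertion budget to the generic deficit across the whole range of $i$, with the boundary case $i\approx\delta\sqrt{2\ln n}$ (where the two regimes meet) being the one to watch. The hypothesis that missing terms exist is harmless, since a Zipf document contains only about $\ln m$ of the $m$ terms, so far more than $k$ terms are missing.
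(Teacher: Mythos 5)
Your proof is correct and follows essentially the same route as the paper's: invoke Lemma~\ref{lem1} to get $\delta$-$n$-genericity with the stated probability, handle $i<\delta\sqrt{2\ln n}$ trivially via the inserted terms, and for $\delta\sqrt{2\ln n}\le i\le\sqrt{2\ln n}$ balance the generic deficit with the insertion budget, $(1-\delta)i+\delta\sqrt{2\ln n}\ge i$. Your exact identity $N_i(d')=\min\bigl(N_i(d)+k,\lfloor e^i\rfloor\bigr)$ is a welcome refinement, as it makes explicit the bookkeeping (inserted terms counting below $e^i$ only when terms there are actually missing) that the paper's proof leaves implicit.
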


\begin{proof}
Since we insert the first $\delta \sqrt{2 \ln n}$ missing terms to $d$, the  number of terms in $d$ is 
always at least $\delta \sqrt{2 \ln n}$. Thus, the statement holds trivially for $i < \delta \sqrt{2 \ln n}$. 
So let's consider the case $i \geq \delta \sqrt{2 \ln n}$. By Lemma \ref{lem1} we know that the probability 
that $d$ is $\delta$-$n$-generic is greater than $1-c^{1.4\,\delta\sqrt{\ln n}}/(1-c)$ for large $n$. 
Now, by inserting the first $\delta \sqrt{2\ln n}$ missing terms to $d$, we see that the probability 
that there are at least $i$ terms $t_j$ with $j \leq e^i$ is greater than 
$1-c^{1.4\,\delta\sqrt{\ln n}}/(1-c)$. 
\end{proof}

We now introduce a threshold, called \emph{\ml}, to give statements about the most
probable size of a maximal intersection:
$$q = q_n:= \sqrt{2 \ln n}.$$


\begin{theorem}[\ML\ for the Zipf
Model]\label{thm:zipf} Let $\calD_n = \{d_1,\ldots,d_n\}$ be a document collection following the
Zipf model. 
\begin{enumerate}
\item \emph{(Prefix match)}. Let $0 < \delta < 1$ be fixed. 
Let $\gamma = 2 + \delta\sqrt{2 \ln n}$ and $c=e^{-\delta^2/2}$. For sufficiently large $n,m$ the following holds: The probability 
that there exists a document in $\calD_n$ that contains the first $q-\gamma$ terms 
of a query document $d_\new \in \calD$ following the Zipf model 
is greater than $1 - c^{\delta\sqrt{\ln n}}/(1-c)$. Thus, the probability tends to one as $n \to \infty$.

\item \emph{(Any match)}. Let $\varepsilon>0$ be fixed.
The probability that there exists a document in $\calD_n$ that
contains more than $(1+\varepsilon)q$ \comment{arbitrary} terms of
a query document $d_\new \in \calD$ following the Zipf model tends to zero as $n\to\infty$.
\end{enumerate}
\end{theorem}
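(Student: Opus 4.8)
The plan is to treat both parts by first conditioning on the query document $d_\new$ being ``regular--like'' (so that the index of its $\ell$-th smallest term is controlled), and then estimating, for a single random document $d\in\calD_n$, the probability of the relevant match before taking a union/complement bound over the $n$ documents. The one computational primitive used throughout is that a random document contains a prescribed set $S$ of terms with probability exactly $\prod_{t_j\in S}1/j$, so that the expected intersection with a fixed query is the sum $\sum_j 1/j$ over the query's term indices, and more refined events factor as products of reciprocal indices.

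For the prefix match (part~1) I would argue via the complementary event ``no document contains the first $q-\gamma$ terms of $d_\new$''. First I condition on the high-probability event of Lemma~\ref{lem2}: after inserting the first $\delta\sqrt{2\ln n}$ missing terms, $d_\new$ satisfies $|\{t_j\in d_\new\mid j\le e^i\}|\ge i$ for all $i\le q$, which fails with probability at most $c^{1.4\,\delta\sqrt{\ln n}}/(1-c)$. Since at most $\delta\sqrt{2\ln n}$ terms were inserted, the $\ell$-th original term has rank at most $\ell+\delta\sqrt{2\ln n}\le q-2$ in the enlarged document for every $\ell\le q-\gamma$, and hence index at most $e^{\ell+\delta\sqrt{2\ln n}}$. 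Therefore a single random document contains all first $q-\gamma$ terms with probability at least
\[
p\ \ge\ \prod_{\ell=1}^{q-\gamma} e^{-(\ell+\delta\sqrt{2\ln n})}\ =\ \exp\!\Big(-\!\!\sum_{\ell=1}^{q-\gamma}(\ell+\delta\sqrt{2\ln n})\Big),
\]
and a short computation gives the exponent $(1-\delta^2)\ln n\,(1+o(1))$, so that $p\ge n^{-(1-\delta^2)(1+o(1))}$. Consequently the probability that none of the $n$ documents contains the prefix is at most $(1-p)^n\le e^{-pn}$ with $pn\ge n^{\delta^2-o(1)}\to\infty$, which is super-polynomially small and thus absorbed into the gap $c^{\delta\sqrt{\ln n}}/(1-c)-c^{1.4\,\delta\sqrt{\ln n}}/(1-c)$. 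Adding the two failure contributions yields the claimed bound $1-c^{\delta\sqrt{\ln n}}/(1-c)$.

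For the any match (part~2) I would use a first-moment (union) bound, but the essential point is that it must be applied \emph{conditionally}. Conditioned on $d_\new$ with term indices $j_1<j_2<\cdots$, a random document shares at least $r:=(1+\eps)q$ terms with $d_\new$ with probability at most $e_r(1/j_1,1/j_2,\dots)$, the $r$-th elementary symmetric function of the reciprocal indices. Fix a small $\delta$ with $(1+\eps)^2>1+\delta$. By a Chernoff upper-tail estimate (and a union over the relevant indices) the random query is, with probability tending to one, ``upper-regular'' in the sense that $|\{t_j\in d_\new\mid j\le e^s\}|\le(1+\delta)s$ for all relevant $s$; on this event $j_\ell\ge e^{\ell/(1+\delta)}$, i.e.\ $1/j_\ell\le e^{-\beta\ell}$ with $\beta=1/(1+\delta)$. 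Using monotonicity of $e_r$ together with the identity $e_r(y,y^2,\dots)=y^{r(r+1)/2}/\prod_{s=1}^r(1-y^s)$ for $y=e^{-\beta}$, the symmetric function collapses to its leading (lowest-index) term up to a convergent constant factor $K_\beta$, giving
\[
e_r\Big(\tfrac1{j_1},\tfrac1{j_2},\dots\Big)\ \le\ K_\beta\, e^{-\beta r(r+1)/2}\ \le\ K_\beta\, n^{-(1+\eps)^2/(1+\delta)}.
\]
A union bound over the $n$ documents then gives $n\cdot K_\beta\,n^{-(1+\eps)^2/(1+\delta)}=O\big(n^{1-(1+\eps)^2/(1+\delta)}\big)\to0$, and adding the vanishing probability that the query is not upper-regular completes part~2.

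The main obstacle is precisely this last estimate and the necessity of conditioning. The naive \emph{unconditional} first moment is governed by $e_r(1/1^2,1/2^2,\dots)=\pi^{2r}/(2r+1)!$, and $n$ times this quantity in fact diverges, since the average is dominated by atypically dense queries; a blunt union bound therefore cannot work. What rescues the argument is that a typical query has geometrically growing term indices $j_\ell\approx e^\ell$, which makes $e_r$ collapse to its single largest term of order $e^{-r^2/2}=n^{-(1+\eps)^2}$. Controlling this geometric decay of $1/j_\ell$ for a random (rather than regular) query, with a slack $\delta$ small enough that $(1+\eps)^2/(1+\delta)$ stays above $1$, is the technical heart of part~2; the analogous control of the indices \emph{from above} (via Lemma~\ref{lem2}) is what drives part~1.
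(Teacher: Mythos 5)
Your part~1 is essentially the paper's own argument in slightly different bookkeeping. The paper conditions on the event of Lemma~\ref{lem2} exactly as you do, but then matches the $(q-2)$-prefix of the \emph{extended} query, whose $\ell$-th term has index at most $e^{\ell}$, getting a per-document probability $p\ge e^{q-1/2}/n$, and finally deletes the $\delta\sqrt{2\ln n}$ inserted terms to retain $q-\gamma$ matched original terms; you instead match the $q-\gamma$ original terms directly, paying a factor $e^{\delta\sqrt{2\ln n}}$ per index. Your exponent computation is correct: $\sum_{\ell\le q-\gamma}(\ell+\delta\sqrt{2\ln n})=(1-\delta^2)\ln n\,(1+o(1))$, so $pn\ge n^{\delta^2-o(1)}$ and the resulting failure term $e^{-n^{\delta^2-o(1)}}$ is indeed absorbed by the gap between $c^{1.4\delta\sqrt{\ln n}}$ and $c^{\delta\sqrt{\ln n}}$. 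Part~2, however, takes a genuinely different route. The paper bounds the Laplace transform $E\exp(\lambda|d_\new\cap d|)$, splits the product at $j=e^\lambda$ into the four quantities $T_1,\dots,T_4$, imposes four \emph{aggregate} regularity conditions at the single scale $\lambda=\sqrt{2\ln n}+\gamma$, and verifies them by Chebyshev from the means and variances; then exponential Chebyshev plus a union bound over the $n$ documents finishes. Your conditional first-moment bound through the elementary symmetric function $e_r$ is a valid alternative: the union bound over $r$-subsets does give $P(|d_\new\cap d|\ge r)\le e_r(1/j_1,1/j_2,\dots)$, the partition identity $e_r(y,y^2,\dots)=y^{r(r+1)/2}/\prod_{s=1}^r(1-y^s)$ is correct, and your observation that the \emph{unconditional} first moment $n\cdot e_r(1/1^2,1/2^2,\dots)=n\,\pi^{2r}/(2r+1)!$ diverges is accurate — it pinpoints why both your argument and the paper's must condition on query regularity. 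Your route buys an explicit polynomial error rate $O\big(n^{1-(1+\eps)^2/(1+\delta)}\big)$ and elementary combinatorics; the paper's route buys regularity conditions that are single-scale sums, so their failure probabilities vanish as $\lambda\to\infty$ with no uniformity issue.

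That uniformity issue is the one genuine gap in your write-up. The upper-regularity event as you state it — $|\{t_j\in d_\new \mid j\le e^s\}|\le(1+\delta)s$ for all relevant $s$ — does \emph{not} hold with probability tending to one if the range of $s$ starts at a constant: the term $t_1$ lies in every document and $t_2$ with probability $1/2$, so already at $s=1$ the bound $X_1\le 1+\delta<2$ fails with probability $1/2$, independently of $n$; and restricting to $s\ge s_0$ for a constant $s_0$ only reduces the failure probability to a constant of order $e^{-c(\delta)s_0}$, not to $o(1)$. The repair is routine but necessary: demand $X_s\le(1+\delta)s+a_n$ for all $s$, with $a_n\to\infty$ slowly (say $a_n=\ln\ln n$). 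Chernoff then gives total failure probability $O(e^{-c(\delta)a_n})\to0$, and on this event $j_\ell\ge e^{(\ell-a_n)/(1+\delta)}$, so your symmetric-function bound acquires an extra factor $e^{\beta a_n r}=n^{o(1)}$ — harmless precisely because you chose $\delta$ with $(1+\eps)^2/(1+\delta)>1$ strictly. With that modification your proof of part~2 is complete and correct.
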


\medskip
\begin{proof}
\begin{enumerate}
\item 
Let $d_R$ be a fixed regular document (Definition \ref{def:reg}).
The probability that a document from $\calD_n$ contains the prefix of
length $q-2$ of $d_R$ is at least
\begin{eqnarray*}
\frac{1}{e}\cdot \ldots \cdot \frac{1}{e^{q-2}} \> >\>
\frac{1}{e^{(q -1)^2/2}} \> &=& \>
\frac{1}{e^{(q^2-2q+1)/2}}\\
\>&=&\> \frac{e^{q-1/2}}{n}. 
\end{eqnarray*}
Note that $e^{q-1/2} < n$ since $e^{(q -1)^2/2} > 1$.
This means that the probability that
there exists no document in $\calD_n$ that contains the $(q-2)$-prefix of $d_R$ is no more than
$$\left( 1 - \frac{e^{q-1/2}}{n} \right)^n
\> <\> e^{-e^{q-1/2}},$$ which follows from inequality
$(\ref{eq:1})$. 
So with probability greater than $1 - e^{-e^{q-1/2}}$
there exists a document in $\calD_n$ that has all terms from the
$(q - 2)$-prefix of $d_R$. Consider $d_\new$. If we insert the first $\delta \sqrt{2 \ln n}$ 
missing terms to $d_\new$, Lemma \ref{lem2} implies that for large $n$ the probability that $d_\new$ 
contains in every group $P_i$, $i \leq \sqrt{2\ln n}$, at least as many terms as $d_R$ is greater than $1-c^{1.4\,\delta\sqrt{\ln n}}/(1-c)$. 
Therefore, the probability that there exists a document in $\calD_n$ that matches the prefix of length $q-2$ of 
the extended query document $d_\new$ is greater than  
$$\left(1-c^{1.4\,\delta\sqrt{\ln n}}/(1-c)\right)\left(1-e^{-e^{q-1/2}}\right).$$
For large $n$, this product is at least $1-c^{\delta\sqrt{\ln n}}/(1-c)$.
It remains to notice that by removing the initially inserted $\delta \sqrt{2 \ln n}$ terms from 
$d_\new$ we still match $q - \gamma$ terms. This concludes the proof. 

\medskip

\item Let us fix a query document $d_\new$ for now.
Let $d$ be a random document following the Zipf model and let $\l > 0$. 
We can evaluate the Laplace transform of the intersection size as follows:
\begin{eqnarray*}
E \exp \left( \l |d_\new \cap d| \right) &=& 
\prod_{j:t_j\in d_\new} \left(1-\frac 1 j + \frac 1 j e^\l \right)
\le \prod_{j:t_j\in d_\new} \left(1 + \frac {e^\l} j \right)
\cr
&=& \prod_{{j:t_j\in d_\new \atop j>e^\l}} \left(1 + \frac {e^\l} j \right)
\cdot
\prod_{{j:t_j\in d_\new \atop j\le e^\l}} \frac {e^\l} j \left(1 + \frac j {e^\l}  \right)
.
\end{eqnarray*}
It follows that
\begin{eqnarray*}
\ln E \exp \left( \l |d_\new \cap d| \right) &\le&
\sum_{{j:t_j\in d_\new \atop j\ge e^\l}} \frac {e^\l} j
+
 \sum_{{j:t_j\in d_\new \atop j\le e^\l}}  (\l-\ln j)
 +
 \sum_{{j:t_j\in d_\new \atop j\le e^\l}} \frac j {e^\l}
\cr
&=& e^\l T_1 +\l T_2- T_3 + e^{-\l}T_4,
\end{eqnarray*}
where
\[
T_1 = \sum_{{j: t_j\in d_\new \atop j\ge e^\l}} \frac 1 j, \qquad T_2 = |d_\new \cap [1, e^\l]|, 
\]
\[
T_3 = \sum_{{j:t_j\in d_\new\atop j\le e^\l}} \ln j, \qquad T_4 = \sum_{{j:t_j\in d_\new\atop j\le e^\l}} j.
\]

We will assume that our fixed query $d_\new$ satisfies the following four {\it regularity conditions}.
\[
e^\l T_1\le \eps \l^2, \qquad  T_2\le (1+\eps) \l,
\]
\[
 T_3\ge (1-\eps)\frac{ \l^2} 2, \qquad  e^{-\l} T_4\le \eps \l^2.
\]

Under these regularity conditions we obtain
\[
   \ln E \exp \left( \l |d_\new \cap d| \right) \le (1+7\eps) \frac{ \l^2} 2\ .
\]

Assuming $(d_j)$ to be a sample of $n$ independent documents distributed according to the Zipf model,  
by the Chebyshev exponential inequality, for any $r>0$ we have
\begin{align*}
P\left(\max_{j\le n} |d_\new \cap d_j| \ge r\right) &\le
  n\  P\left( |d_\new \cap d| \ge r\right) \\
&\le n\ \frac {E \exp \left( \l |d_\new \cap d| \right)}{e^{\l r}} \\
&\le n \exp \left( (1+7\eps) \frac{ \l^2} 2 -\l r \right).
\end{align*}
Take any $\gamma>0$. By choosing now
\[
r=(\sqrt{2\ln n}+\gamma) (1+7\eps)
\]
and
\[
\l=\sqrt{2\ln n}+\gamma
\]
we obtain $(1+7\eps)\l^2=\l r$, hence
\begin{align*}
P\left( \max_{j\le n} |d_\new \cap d_j| \ge r\right) &\le
  n\ \exp \left( -(1+7\eps) \frac{ \l^2} 2  \right) \\
&\le n\ \exp \left( - \frac{ (\sqrt{2\ln n}+\gamma )^2} 2  \right)
\to 0
\end{align*}
for $n\to \infty$, as required. 
Let now  the query $d_\new$ be randomly chosen according to the Zipf model. 
For a sufficiently large $\l$ it is true that
\begin{eqnarray*}
ET_1 &=& \sum_{j\ge e^\l} \frac 1 {j^2} \le e^{-\l};
\cr
ET_2 &=& \sum_{j\le e^\l} \frac 1 {j} \le \l;\qquad
\Var T_2 \le \sum_{j\le e^\l} \frac 1 {j} \le \l;
\cr
ET_3 &=& \sum_{j\le e^\l} \frac {\ln j} {j} \le \frac{\l^2} 2;\qquad
\Var T_3 \le \sum_{j\le e^\l} \frac {\ln^2 j} {j} \le \frac{\l^3} 3;
\cr
ET_4 &=& \sum_{j\le e^\l} 1 \le e^{\l}.
\end{eqnarray*}
Let us explain the bound for $ET_3$. Let $2\leq b \leq S$ be integers.  
$$\sum_{j \le S} \frac {\ln j} {j} \leq \int_{b}^{S} \frac{\ln x}{x} dx + \sum_{j=2}^{b} \frac{\ln j}{j} = \frac{\ln^2 S}{2} - \frac{\ln^2 b}{2} + \sum_{j=2}^{b} \frac{\ln j}{j}.$$
For $b \geq 21$ it holds that $$\frac{\ln^2 b}{2} \geq \sum_{j=2}^{b} \frac{\ln j}{j}$$ and therefore $$\sum_{j \le S} \frac{\ln j}{j} \leq \frac{\ln^2 S}{2}.$$ 
Hence, $\l \geq 4$ yields the desired bound on $ET_3$. The bound on $\Var T_3$ is shown by similar techniques.

It remains to notice that the probability of each of the four regularity conditions to be true
for $d_\new$ tends to one as $\lambda\to\infty$. Indeed, the expectations and variances calculated above easily show
this fact.
\end{enumerate}
\end{proof}

By Theorem \ref{thm:zipf} we can conclude that with high
probability there exists a~document in $\calD$ that matches the prefix of length $q - \gamma$ of $d_\new$, 
whereas the probability to find a document that has more than $(1+\varepsilon)q$
common terms with $d_\new$ (at arbitrary positions) is quite small. 
Therefore, it suffices to determine a document that has a maximal common prefix 
with the query document. This fact, however, allows to sort the documents 
according to their sorted term lists\footnote{In a sorted term list 
the terms of a document are ordered according to the position of the terms in the frequency 
table.} and then perform a binary search based on the sorted 
term list of the query document (Figure \ref{fig:zipfalgo}).
\begin{figure}[h]\label{fig:zipfalgo}
\begin{BoxItAlgo}
  \item[{\bf Preprocessing}] \rule{0pt}{0pt}
    \begin{enumerate}
      \item For every document: Sort the term list according to
        the position of the terms in the frequency table.
      \item Sort the documents according to their sorted term lists. 
    \end{enumerate}
  \item[{\bf Query}]
  Find a document having the maximal common
  prefix with the query document by binary search.
\end{BoxItAlgo}
\caption{{\bf \maxint\ algorithm in the Zipf model}}
\end{figure}
The running time is as follows (for the \emph{average}
case\footnote{Only for the average case our constraints from
Section \ref{sec:intro} are preserved.} analysis we assume that
the length of term lists is $\log m \in \calO(\log n)$,
for the \emph{worst} case analysis the length is $m$):

\bigskip
\comment{
\renewcommand{\arraystretch}{1.3}
\begin{tabular}[ht]{|p{1.5cm}|p{3.5cm}|p{2.5cm}|}
  \hline
   & average & worst\\
  \hline
  Step 1 & $\calO(n \cdot \log m \cdot \log \log m)$ & $\calO(n \cdot m \cdot \log m)$\\
  Step 2 & $n^{1+o(1)}$ & $\calO(n^2 \cdot \log n)$\\
  Step 3 & $\calO(\log m \cdot n \cdot \log n)$ & $\calO(m \cdot n \cdot \log n)$\\
  Query & $\calO(\log^2 n)$ & $\calO(\log^2 n)$\\
  \hline
\end{tabular}

\bigskip
Let us explain the estimations from the second line. The number of all
possible regular $(q-\gamma)$-lists is equal to
$$|P_1|\cdot\dots\cdot|P_{q-\gamma}|\leq \prod_{k=1}^{q-\gamma} e^k < e^{q^2/2}=n$$
Therefore, a single document can generate at most $n$ different
regular $(q-\gamma)$-lists, the $\log n$ factor arises from the size of a
single list. Let us
prove the bound for the average case. The probability of containing
some fixed regular $(q-\gamma)$-list is $n^{-1+o(1)}$. Summing over all
possible lists we see that the expected number of generated regular
lists per document is at most $n^{-1+o(1)}\cdot n = n^{o(1)}$.
Therefore, the expected time for the indexing stage is $n^{1+o(1)}$.
}
\renewcommand{\arraystretch}{1.3}
\begin{tabular}[ht]{|p{1.5cm}|p{3.5cm}|p{2.5cm}|}
  \hline
   & average case & worst case\\
  \hline
  Step 1 & $\calO(n \cdot \log n \cdot \log \log n)$ & $\calO(n \cdot m \cdot \log n)$\\
  Step 2 & $\calO(n\cdot \log^2 n)$ & $\calO(n \cdot \log n \cdot m)$\\
  Query & $\calO(\log^2 n)$ & $\calO(\log n \cdot m)$\\
  \hline
\end{tabular}

\bigskip

The $\log$ factor in the query step is due to the fact that the algorithm 
performs a binary search on the document collection. Since in the average case 
the length of a term list is $\calO(\log n)$, we get another $\log$ factor resulting in 
a query time of $\calO(\log^2 n)$. One can try to improve the accuracy of our algorithm by finding a
``maximal prefix with at most one difference to the query
document''. A recent technique called ``indexing with errors"
\cite{CGL04,MN05} might be useful for such an extension.

\section{\maxint\ in the Hierarchical Scheme}\label{sec:HierSch}
\begin{figure}
\centering{
\begin{tikzpicture}[scale=.06,line width=1pt]

 \draw (0,0) rectangle (80,80);
 \draw (0,20) -- (80,20);
 \draw (0,40) -- (80,40);
 \draw (0,60) -- (80,60);

 \draw (40,60) -- (40,0);
 \draw (20,40) -- (20,0);
 \draw (60,40) -- (60,0);

 \draw (10,20) -- (10,0);
 \draw (30,20) -- (30,0);
 \draw (50,20) -- (50,0);
 \draw (70,20) -- (70,0);

 \filldraw (38,67) circle (10pt);
 \filldraw (40,75) circle (10pt);
 \filldraw (42,70) circle (10pt);

 \filldraw (18,47) circle (10pt);
 \filldraw (20,55) circle (10pt);
 \filldraw (22,50) circle (10pt);
 \filldraw (58,47) circle (10pt);
 \filldraw (60,55) circle (10pt);
 \filldraw (62,50) circle (10pt);

 \filldraw (8,27) circle (10pt);
 \filldraw (10,35) circle (10pt);
 \filldraw (12,30) circle (10pt);
 \filldraw (28,27) circle (10pt);
 \filldraw (30,35) circle (10pt);
 \filldraw (32,30) circle (10pt);
 \filldraw (48,27) circle (10pt);
 \filldraw (50,35) circle (10pt);
 \filldraw (52,30) circle (10pt);
 \filldraw (68,27) circle (10pt);
 \filldraw (70,35) circle (10pt);
 \filldraw (72,30) circle (10pt);

 \filldraw (2,7) circle (10pt);
 \filldraw (5,15) circle (10pt);
 \filldraw (7,10) circle (10pt);
 \filldraw (12,7) circle (10pt);
 \filldraw (15,15) circle (10pt);
 \filldraw (17,10) circle (10pt);
 \filldraw (22,7) circle (10pt);
 \filldraw (25,15) circle (10pt);
 \filldraw (27,10) circle (10pt);
 \filldraw (32,7) circle (10pt);
 \filldraw (35,15) circle (10pt);
 \filldraw (37,10) circle (10pt);
 \filldraw (42,7) circle (10pt);
 \filldraw (45,15) circle (10pt);
 \filldraw (47,10) circle (10pt);
 \filldraw (52,7) circle (10pt);
 \filldraw (55,15) circle (10pt);
 \filldraw (57,10) circle (10pt);
 \filldraw (62,7) circle (10pt);
 \filldraw (65,15) circle (10pt);
 \filldraw (67,10) circle (10pt);
 \filldraw (72,7) circle (10pt);
 \filldraw (75,15) circle (10pt);
 \filldraw (77,10) circle (10pt);

 \draw[thick, line width=2pt] (0,60) rectangle (80,80);
 \draw[thick, line width=2pt] (40,40) rectangle (80,60);
 \draw[thick, line width=2pt] (40,20) rectangle (60,40);
 \draw[thick, line width=2pt] (50,0) rectangle (60,20);

 \draw[thick, line width=2pt] (40,75) circle (60pt);
 \draw[thick, line width=2pt] (48,27) circle (60pt);
 \draw[thick, line width=2pt] (62,50) circle (60pt);
 \draw[thick, line width=2pt] (57,10) circle (60pt);

\end{tikzpicture}
}
\caption{Hierarchical scheme}\label{fig:hs}
\end{figure}
Our second model is motivated by the observation that in many existing applications terms
can be ordered hierarchically. Let $k \geq 8$ be an integer and let $\calT$ be a set of $(2^k-1) \cdot k$
different terms. A document collection $\calD_{k}$ consists of $2^k$ sets 
where every set $d \in \calD_{k}$ is a subset of $\calT$ with $|d| = k$. 
A \emph{hierarchical scheme} is a table with $k$ levels,
level $1$ to level $k$. Level $i$, $1 \leq i \leq k$, is divided into $2^{i-1}$ cells, cell $C_{i,1}$ to 
cell $C_{i,2^{i-1}}$. For $2 \leq l \leq k$ we say that cell $C_{l-1,j}$, $1\leq j \leq 2^{l-2}$, is \emph{above} 
cell $C_{l,j'}$, $1\leq j' \leq 2^{l-1}$, if $\lceil j'/2 \rceil = j$. Every cell 
contains $k$ terms. A document collection based on this scheme can be generated as follows: 
Every document is generated independently. Choose a random cell on level $k$ and mark it. Then, for $l = \{k,\ldots,2\}$, 
mark on level $l-1$ the cell that is above the already marked cell on level $l$. 
Now choose one random term in every marked cell. 
The so defined set of terms form a document of our collection. Note
that every document generated by this process corresponds to a unique sequence of 
cells. We'll call such a sequence a \emph{cell path}. There exists a natural ordering 
on these cell paths where the cells $C_{1,1},C_{2,1},\ldots,C_{k,1}$ describe the leftmost 
path, the cells $C_{1,1},C_{2,2},\ldots,C_{k,2^{k-1}}$ accordingly the rightmost one (see Figure \ref{fig:hs}).

\begin{remark}
We claim that the hierarchical scheme follows \emph{Zipf's law}.
To be more precise, the following holds: For every level the
product of expected frequency and expected frequency rank of a term is the
same. Indeed, the expected frequency of a term on level $i$ is
given by the formula $2^k/(2^{i-1} \cdot k)$. The
expected rank of a~term is given by the formula $(2^{i-1}
-1) \cdot k + 2^{i-2} \cdot k$. Hence, the product between frequency
and frequency rank (divided by $2^k$) is equal to
$$\frac{2^k}{2^{i-1} \cdot k} \cdot \left(\frac{3}{2} \cdot 2^{i-1} -1\right)\cdot \frac{k}{2^k} \in \left[0.5, 1.5\right),$$
which means it lies in a fixed interval and therefore follows the idea of Zipf's law.

\end{remark}

This time we introduce two \emph{\mls} to give statements about
the most probable size of a maximal intersection. The \mls\
are
$$q \>=\> \frac{k}{1+\log k}\qquad\mbox{and}\qquad q'\>=\> \frac{k}{\log k}.$$
\begin{remark}
Again, to keep proofs more legible, we do not demand that the length of 
a prefix or the matching size is an integer (except part one of the following theorem). But clearly, in real 
settings these values have to be rounded appropriately. 
\end{remark}
\begin{theorem}[\MLs\ for Hierarchical Scheme]\label{thm:hier}
Let $k\geq 8$ be an integer and $2 \leq \gamma<q$. Let $\calD_{k}$ be a document 
collection following the hierarchical scheme.
\begin{enumerate}
\item \emph{(Prefix match)}. The probability that there exists a document $d \in \calD_{k}$
that matches the first $\lfloor q-\gamma\rfloor$ terms (i.e. the terms from level $1$ to level $\lfloor q-\gamma\rfloor$) of a 
query document $d_\new$ following the hierarchical scheme is greater than $1-2^{-(2k)^{\gamma}}$.
\item
\emph{(Any match)}. The probability that there exists a document $d \in \calD_{k}$ that
matches at least $q'+\gamma$ \comment{arbitrary} terms of 
a query document $d_\new$ following the hierarchical scheme is smaller than $2/k^{\gamma-1}$.
\end{enumerate}
\end{theorem}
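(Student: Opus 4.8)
The plan is to reduce both statements to a single-document computation and then combine over the $2^k$ independent documents of $\calD_k$. The crucial structural fact is that a document of the hierarchical scheme is a root-to-leaf \emph{cell path} together with one uniformly chosen term per cell along the path, and that two cell paths, once they diverge in the underlying binary tree, never meet again. Hence two documents can share a term only inside a cell lying on the common prefix of their paths, and inside such a shared cell they agree with probability exactly $1/k$. Two calibration identities drive everything: writing $\log=\log_2$, we have $\log(2k)=1+\log k$, so $(2k)^q=2^{q(1+\log k)}=2^k$, and likewise $k^{q'}=2^{q'\log k}=2^k$. These are precisely the two matching levels, so the per-document probabilities will balance against the collection size $2^k$.

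For part 1, I would first compute the probability that one random document matches the prefix of length $p:=\lfloor q-\gamma\rfloor$ of $d_\new$. Its leaf must fall in the subtree below the query's level-$p$ cell, which happens with probability $2^{k-p}/2^{k-1}=2^{1-p}$; conditioned on this, the $p$ term choices must all coincide with those of $d_\new$, contributing $k^{-p}$. Thus a single document matches with probability $2^{1-p}k^{-p}=2\,(2k)^{-p}$, and by independence the probability that none of the $2^k$ documents matches is $(1-2(2k)^{-p})^{2^k}$. Applying inequality $(\ref{eq:1})$ with $b=2^k$ bounds this by $\exp(-2^{k+1}(2k)^{-p})$. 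Since $p\le q-\gamma$ gives $(2k)^{-p}\ge (2k)^{\gamma}(2k)^{-q}=(2k)^{\gamma}2^{-k}$ by the calibration identity, the exponent is at least $2(2k)^{\gamma}$, and using $e^{-2x}\le 2^{-x}$ yields the claimed $1-2^{-(2k)^{\gamma}}$.

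For part 2, I would bound $P(\exists\,d\in\calD_k:\ |d\cap d_\new|\ge s)$ with $s:=q'+\gamma$ by the union bound $2^k\,P(|d\cap d_\new|\ge s)$ over two independent documents $d,d_\new$. Let $L$ be the length of their common cell-path prefix; since the two leaves are uniform and independent, $P(L\ge\ell)=2^{1-\ell}$. Conditioned on $L=\ell$, the intersection size is $\mathrm{Bin}(\ell,1/k)$, as matches can occur only in the $\ell$ shared cells and occur independently with probability $1/k$ each. Using $P(\mathrm{Bin}(\ell,1/k)\ge s)\le\binom{\ell}{s}k^{-s}$ and $P(L=\ell)\le 2^{1-\ell}$,
$$P(|d\cap d_\new|\ge s)\ \le\ 2k^{-s}\sum_{\ell\ge s}\binom{\ell}{s}2^{-\ell}\ =\ 4\,k^{-s},$$
where the identity $\sum_{\ell\ge s}\binom{\ell}{s}x^{\ell}=x^{s}/(1-x)^{s+1}$ at $x=1/2$ supplies the constant. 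Multiplying by $2^k$ and invoking $k^{q'}=2^k$ gives $2^{k+2}k^{-s}=4k^{-\gamma}$, which is at most the claimed $2/k^{\gamma-1}$ for $k\ge 8$.

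I expect the main obstacle to be part 2: the naive per-document bound $P(\mathrm{Bin}(\ell,1/k)\ge s)\le 1/s!$ is fatally lossy, because the contribution of long common prefixes (large $\ell$, up to $k$) is not negligible and a Stirling estimate of $s!$ does not compensate for the factor $2^{k-s}$. The resolution is to retain the full dependence on $\ell$, weight the conditional binomial tail by $P(L=\ell)\le 2^{1-\ell}$, and exploit the exact generating-function identity so that the geometric decay of $P(L\ge\ell)$ cancels the growth of $\binom{\ell}{s}$, leaving a bounded constant. Checking that $s<k$ holds for $k\ge 8$ (so the binomial and the matching levels are meaningful) and keeping the base of every logarithm straight are the routine points that remain.
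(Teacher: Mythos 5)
Your proposal is correct and takes essentially the same route as the paper: for part 1 you compute the exact per-document prefix-match probability $2(2k)^{-p}$ (the paper equivalently bounds the number of possible prefixes by $2^k(2k)^{-\gamma}$) and finish with inequality (\ref{eq:1}), and for part 2 you use the same decomposition into the geometric decay $2^{1-\ell}$ of the common cell path, per-cell match probability $1/k$, and a union over position subsets. The only cosmetic difference is the part-2 bookkeeping: the paper bounds the number of position subsets by $2^{t}$, pays a factor $k$ for the possible last matched levels, and concludes via inequality (\ref{eq:2}) on the complement, whereas your exact summation $\sum_{\ell\ge s}\binom{\ell}{s}2^{-\ell}=2$ followed by a plain union bound yields the marginally sharper $4/k^{\gamma}\le 2/k^{\gamma-1}$.
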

\medskip
\begin{proof}
\begin{enumerate}
\item The number of different prefixes of length $\lfloor q-\gamma \rfloor$ is at most
$$k(2k)^{q-\gamma-1} \><\> 2^{(1 + \log k)(q - \gamma)} \>=\> 2^{(1 + \log k)(k/(1 + \log k) - \gamma)}
\>=\> 2^k\cdot(2k)^{-\gamma}.$$ So the probability that a new
document does not match a prefix of length $\lfloor q - \gamma \rfloor$ with any
document from $\calD_{k}$ is smaller than
$$\left(1 - \frac{(2k)^{\gamma}}{2^k}\right)^{2^k} \><\> e^{-(2k)^{\gamma}}\><\> 2^{-(2k)^{\gamma}}.$$
For $k\geq 8$ and $\gamma < q$ it holds that $(2k)^{\gamma} < 2^k$ and therefore the above inequality follows 
from inequality $(\ref{eq:1})$. We get that the
probability that there exists a document in $\calD_{k}$ with the same prefix as
$d_\new$ of length $\lfloor q - \gamma \rfloor $ is greater than $1 -
2^{-(2k)^{\gamma}}$.
\medskip
\item Let $t \geq q'+\gamma$ be the last position where the terms
of $d$ and $d_\new$ match. We want to estimate the probability
that $d_\new$ matches at least $q'+\gamma$ terms at arbitrary
positions with $d$. The probability that the first $t$ terms
(beginning at level $1$) of $d$ and $d_\new$ are all in the same
cells is $2^{1-t}$. The probability that at least $q'+\gamma$
terms are matched on some fixed positions is no more than
$\left(1/k\right)^{q'+\gamma}\cdot
\left((k-1)/k\right)^{t - q' -\gamma}$. An
upper bound for the number of different possibilities of matching
at least $q'+\gamma$ out of $t$ terms is $2^t$. Since the factor
$\left((k-1)/k\right)^{t - q' - \gamma}$ is smaller than 1,
overall we get that the probability that $d_\new$ matches at least
$q'+\gamma$ terms at arbitrary positions with $d$ is smaller than
$$k \cdot 2^t \cdot \left(\frac{1}{k}\right)^{q'+\gamma}\cdot 2^{1-t} \>=\> 2 \cdot k \cdot \left(\frac{1}{k}\right)^{q'+\gamma} \>=\> 2 \cdot \left(\frac{1}{k}\right)^{q'+\gamma-1}.$$
The factor $k$ in the above equation is due to the fact that we need to consider all possible levels for the last matched position $t$.
Now the probability that no document matches at $q'+\gamma$ arbitrary positions with $d_\new$
is at least
\begin{multline*}
 \left(1 - 2 \cdot \left(\frac{1}{k}\right)^{q'+\gamma-1}\right)^{2^k} \>=\>
 \left(1 - \frac{1}{2^k\cdot \frac{k^{\gamma-1}}{2}}\right)^{2^k} \>
 \geq\> 1 - \frac{2}{k^{\gamma-1}},
\end{multline*}
which follows from inequality $(\ref{eq:2})$, since $\gamma \geq 2$ and $k\geq
8$. So the probability that there exists a document in $\calD_{k}$ that matches 
at least $q'+\gamma$ terms of $d_\new$ is smaller
than $2/k^{\gamma-1}$.
\end{enumerate}
\end{proof}

Theorem \ref{thm:hier} yields that also for the hierarchical scheme it suffices to search 
a document that has a maximal common prefix with the query document. The resulting algorithm is analogue to the one
for the Zipf model and summarized on Figure \ref{fig:hieralgo}.

\begin{figure}[h]
\begin{BoxItAlgo}
  \item[{\bf Preprocessing}] \rule{0pt}{0pt}
    \begin{enumerate}
      \item For every document: Sort the term list according to
        the hierarchical scheme, i.e. according to the levels in which
        the terms appear.
      \item Sort the documents according to their corresponding cell paths, i.e.
        documents that correspond to the leftmost path in the scheme are at the beginning
    of the sorted list.
        Documents that correspond to the same cell path are sorted
        lexicographically.
    \end{enumerate}
  \item[{\bf Query}]
  Find a document having the maximal common
  prefix with the query document by binary search.
\end{BoxItAlgo}
\caption{{\bf \maxint\ algorithm in the hierarchical scheme}}\label{fig:hieralgo}
\end{figure}

The running time is shown in the table below. Note that for the hierarchical scheme we 
only perform a worst case analysis since every document has equal length. 

\bigskip
\renewcommand{\arraystretch}{1.3}
\begin{tabular}[ht]{|p{1.5cm}|c|p{2.5cm}|p{2.5cm}|}
  \hline
   & average case & worst case\\
  \hline
  Step 1 & -- & $\calO(2^k \cdot k \cdot \log k)$\\
  Step 2 & -- & $\calO(2^k \cdot k^2)$\\
   Query & -- & $\calO(k^2)$\\
  \hline
\end{tabular}

\section{Further Work}\label{sec:open}

In this paper we have shown that assumptions on the random nature
of the input can lead to \emph{provable} time and accuracy bounds
for \maxint. Also, we have discovered a \maxint\ threshold
phenomenon in two randomized models.

The next step is to understand it better. Does it hold for other
randomized models from \cite{N03}, especially for generalized
random graphs with a power-law degree distribution? Does it hold
in the real life networks? Can we introduce randomized models for
sparse vector collections and find a similar effect there? It was
observed that tractable instances of nearest neighbors have small
intrinsic dimension \cite{C06,GLS08,KL04,KR02}. Does the same
effect hold for the Zipf model and the hierarchical scheme? Of course,
the most challenging problem is to find an exact algorithm for
\maxint\ (preserving our time constraints) or to prove its
hardness. What are other particular cases or assumptions that have
efficient \maxint\ solutions? On the other hand, we have a very
particular subcase for which we still do not believe in a positive
solution. Hence, we ask for a hardness proof for the following
\emph{on-line inclusion problem}.
\begin{BoxIt}[On-line Inclusion Problem]
  \DB A family $\calF$ of $2^k$ subsets of $[1\ldots k^2]$.
  \QY Given a set $f_\new\subseteq [1\ldots k^2]$, decide whether
    there exists an $f\in\calF$ such that $f_\new\subseteq f$.
  \CONSTR Space for preprocessed data $2^k\cdot \poly(k)$.\\
    Query time $\poly(k)$.
\end{BoxIt}
Note that we have a constraint on space for preprocessing, not
time. A related problem but with a much stronger restriction on
preprocessing space was proven to be hard by Bruck and Naor
\cite{BN90}.

Our algorithm in Section \ref{sec:HierSch} uses  polylogarithmic
time (in the number of documents) but it returns only an
approximate solution with high probability (not every time). Can
we get an optimal solution or at least a \emph{guaranteed}
approximation by relaxing the time constraint to \emph{expected}
polylogarithmic time?

The maximal intersection problem is a special case of a whole
family of problems called \emph{Strongest Connection Problems}
(SCP) which covers all problems fitting the following framework.
Consider some class of graphs $\calG$ and some class of paths
$\calP$.
\begin{BoxIt}[Strongest Connection Problems]
  \DB A~graph $G\in\calG$.
  \QY Given a (new) vertex $v$ (together
    with edges connecting $v$ with $G$),
    return a vertex $u\in G$
    with a maximal number of $\calP$-paths from $v$ to $u$.
  \CONSTR Preprocessing time $o(|G|^2)$.\\
    Query time $o(|G|)$.
\end{BoxIt}

A number of well-motivated problems fall into the family of SCP.
Some of them are listed below but the application of the SCP
framework is not limited to these instances.
\begin{description}
  \item[Recommendations.]
    $\calG$: bipartite graphs;
    $\calP$: paths of length three. \\
    \emph{Example:} A graph $G$ is partitioned into vertices
    representing people and books, and the edge relation
    describes who has bought which books. Given a person $v$. Which is
    a book (not purchased by $v$) that is most often bought by people
    that had been interested in books of $v$?
  \item[Similarity in folksonomies.]
    $\calG$: tripartite 3-graphs where the set of vertices is partitioned
      into $(V_0, V_1, V_2)$ and $E$ is an edge relation
      in $V_0\times V_1\times V_2$;
    $\calP$: paths consisting of two edges that overlap either in $V_0$ or in $V_1$ or both
      in $V_0$ and $V_1$. \\
    \emph{Example:} A graph representing all events of kind ``a user $U$ used
    a tag $T$ to label a website $W$''. Then, the strongest connection
    query for the tag $T_{new}$ returns another tag that was
    most often co-used by the same users and/or applied to the same
    websites. Such tripartite 3-graph is accumulated in
    \mbox{\url{del.icio.us}}.
  \comment{
  \item[Semantic search.]
    $\calG$: graphs $(V, E)$ whose edge relations $E$ are $k$-partitioned
      into $(E_1, \ldots, E_k)$ where $E_i\subseteq V_{i-1}\times V_i$
      with $1\leq i\leq k$ and $V=\bigcup_{0\leq j\leq k}V_j$. \\
    \emph{Example:} A graph relating friends ($V_0=V_1$), bars ($V_2$),
    and drinks ($V_3$).
    Given a person $v\in V_0$. Which is a most popular drink that is
    available in bars visited by the friends of $v$?
  }
\end{description}

\paragraph{Acknowledgments.} 
We thank the anonymous referees for their very detailed reports. We also thank Volker Diekert, J\"urn Laun, and 
Ulrich Hertrampf for many helpful comments.

\bibliographystyle{abbrv}
\bibliography{References}

\end{document}